


\documentclass[preprint,12pt]{elsarticle}




\usepackage{amssymb}
 \usepackage{amsthm}



\begin{document}

\begin{frontmatter}



\title{Randomized Strategy for Walking in Streets for a Simple Robot}


%
\author[mymainaddress]{Azadeh Tabatabaei\corref{mycorrespondingauthor}}
\cortext[mycorrespondingauthor]{Corresponding author}
\ead{atabatabaei@ce.sharif.edu}

\author[mymainaddress,mysecondaryaddress]{Mohammad Ghodsi }
\ead[url]{http://sharif.ir/~ghodsi/}
\address[mymainaddress]{Department of Computer Engineering, Sharif University of Technology, Tehran, Iran}
\address[mysecondaryaddress]{Institute for Research in Fundamental Sciences (IPM), Tehran, Iran}

\begin{abstract}
We consider the problem of walking in an unknown street, for a robot that has a minimal sensing capability. The robot is equipped with a sensor that  only detects the discontinuities in depth information (gaps) and can locate the target point as enters in its visibility region. First, we propose an online deterministic search strategy that generates an optimal search path for the simple robot to reach the target $t$, starting from $s$. The competitive ratio of the strategy is 9. In contrast with previously known research, the path is designed without memorizing any portion of the scene has seen so far.  Then, we present a randomized search strategy, based on  the deterministic strategy. We prove that the expected distance traveled by the robot is at most a 5.33 times longer than the shortest path to reach the target.\end{abstract}



\newtheorem{theorem}{Theorem}
\newtheorem{lemma}[theorem]{Lemma}
\newtheorem{cor}[theorem]{Corollary}
\newtheorem{prop}[theorem]{Proposition}
\newtheorem{invar}{Invariant}
\newtheorem{obs}{Observation}
\newtheorem{conj}{Conjecture}
\newtheorem{defini}{Definition}



\begin{keyword}
Computational Geometry \sep Unknown Environment \sep Street Polygon\sep Competitive Ratio \sep Simple Robot


\end{keyword}

\end{frontmatter}


\section{Introduction}

Path planning is a basic problem to almost all scopes of computer science; such as computational geometry, online algorithms, robotics and artificial intelligence \cite{GHKL}. Especially, path planning in an unknown environment for which there is no geometric map of the scene is interesting in many real life cases. Robot sensors is the only tool for gathering  information in an unknown street. Amount of the information achieved from the environment depends on the capability of the robot. Due to the importance of using simple robot, including low cost, less sensitive to failure, robust against sensing errors and noise, many types  of path planning for simple robot have been studied \cite{Bae,kao,suri1}.

 In this paper, we consider the problem of walking a simple robot in an unknown street.
A simple polygon $P$ with two separated vertices $s$ and $t$ is called a street if the left boundary chain $L_{chain}$ and the right boundary chain $R_{chain}$ constructed on the polygon from $s$ to $t$ are mutually weakly visible. In other words each point on the left chain can see at least one point on the right chain and vice versa \cite{Klein1}, see Figure \ref{f3}.
A point robot which its sensor has a minimal capability that can only detect discontinuities in depth information (gaps) and the target point $t$, starts  searching the street.  The robot can locate the target as soon as it enters in its visibility region. Also, the robot cannot measure any angles or distances or infer its position, see Figure \ref{f3}. The goal is to reach the target $t$ using the information gathered through its sensor, starting from $s$ such that the traveled path by the robot  is as short as possible.

In order to evaluate the efficiency of a search strategy for the robot, we use the notation of the competitive analysis. The competitive analysis for a strategy that leads the robot is the ratio of the expected distance traversed by the robot over the shortest distance from $s$ to $t$, in the worst case. In previous research, Tabatabaei and Ghodsi gave a deterministic algorithm for the simple robot to reach  the target $t$ in the street, starting from $s$.  The robot using some pebbles and memorizing some portion of the street has seen, explores the street. The target $t$ is achieved such that the traversed path is at most 11 times longer than the shortest by using one pebble. Also they showed, allowing use of  many pebbles reduces the factor to 9 \cite{tab}. Furthermore, they presented a randomized strategy in~\cite{EUR}.
 
In this paper, first,  we present a  deterministic strategy  using the location of two special gaps which are updated during the walking.  The robot achieves the target,  without memorizing environment and without using pebbles.  The search path is optimal; length of the generated path is at most 9 times longer than shortest path. Then, we present a  randomized    strategy that generate a search path similar to the deterministic one, but   the worst case ratio of the expected distances traveled by the robot to the shortest path is 5.33. 
%

\textbf{Related Works:} Klein proposed the first competitive algorithm for walking in streets problem  for a robot that was equipped with a 360  degrees vision system \cite{Klein1}. Also,  Icking et al. presented an optimal  search strategy for the problem with the competitive factor of  $\sqrt{2}$   \cite{IRE}. Many online strategies for patrolling unknown environment such as street, generalized street, and star polygon are presented in~\cite{GHKL,lopal}.

The limited sensing model (gap sensor) that our robot is equipped with,  in this research,  was first introduced by Tovar et al. \cite{Tovar}.  They offered Gap Navigation Tree (GNT)  to maintain and update the gaps seen along a  navigating path.  A strategies, using GNT for exploring unknown environments, presented in \cite{new}. 

Another minimal sensing model was presented  by Suri et al. \cite{suri1}. They assumed that the simple  robot can only sense the combinatorial (non-metric)
properties of the environment. The robot  can locate the  vertices of the polygon in its visibility region, and can report if there is a polygonal edge between them. Despite of the
minimal ability, they showed that the robot  can accomplish many non-trivial tasks. Then, Disser et al.   empowered  the robot with a compass to solve the mapping problem in polygons with holes \cite{disser}.
\begin{figure}
	 \centering
\includegraphics[height=7cm]{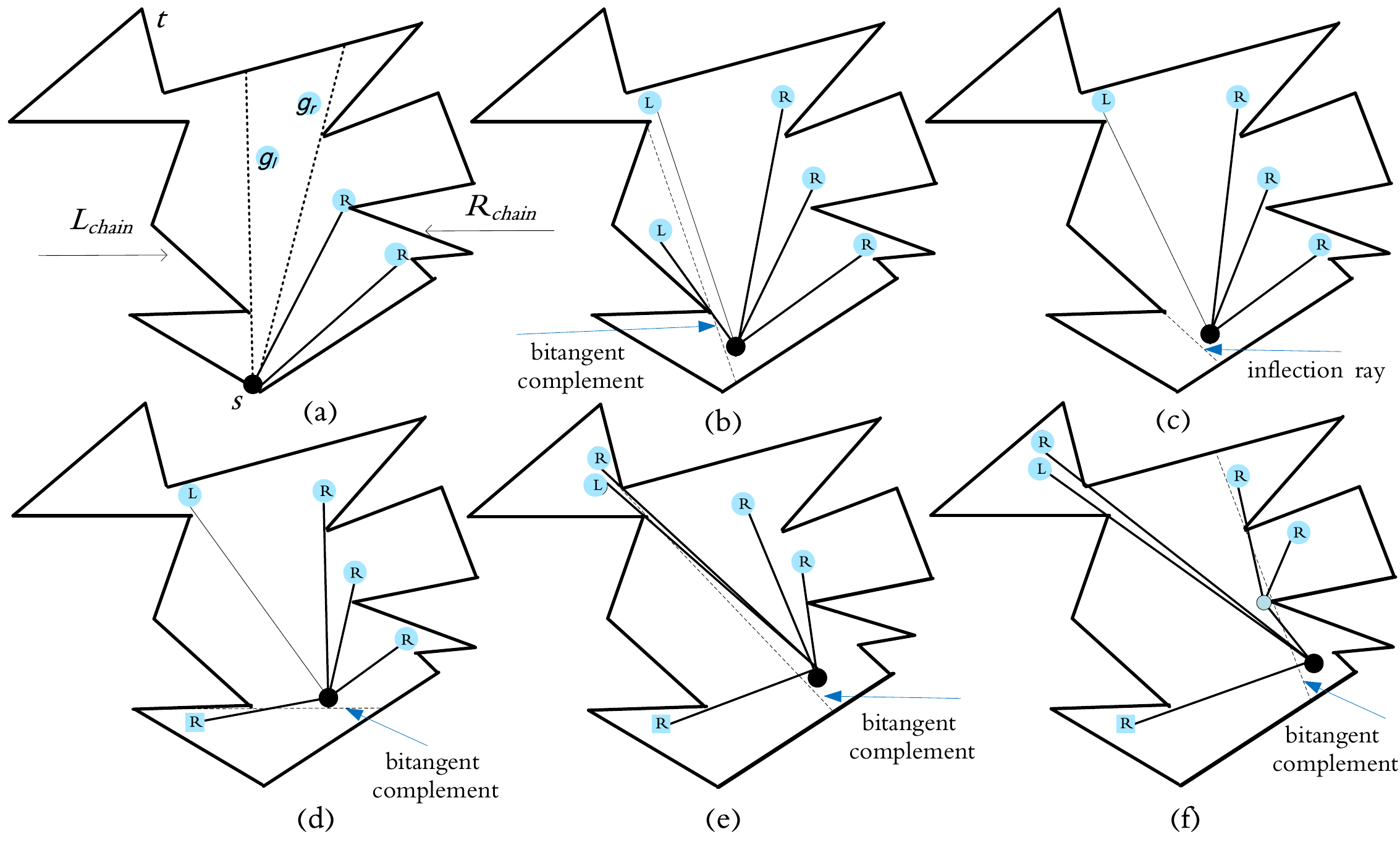}
	\caption{Street polygons, and the dynamically changes of the  gaps as the robot walks towards a gap in  street polygon. The dark circle is the location of the robot, and squares and other circles denote primitive and non-primitive gaps respectively.   (a)	Existing  gaps at the start point. (b) A split event. (c) A disappearance event. (d) An appearance event. (e) Another split event. (f) A merge event.
}
	\label{f3}
\end{figure}

\section{preliminaries}
\subsection{The Sensing Model and  motion primitives}
At the start point, the point robot reports a cyclically ordered of discontinuities in the depth information (gaps) in its visibility region. Each gap has a label of $L$ or $R$ which displays the direction of the part of the scene that is hidden behind the gap, see Figure \ref{f3}. 

 The robot can orient its heading to each gap and moves towards the gap in an arbitrary number of steps. Also the robot  moves
towards  the target as they enter in its visibility region. 

While the robot moves, combinatorial changes occur in the visibility region of the robot that they are called critical events. There are four types of critical events: appearances, disappearances, merges, and splits of gaps.  Appearance and disappearance events occur  when the robot crosses inflection rays. An  appeared  gap, during the movement,  corresponds to a portion of the environment that was already  visible, but  now is not visible. such the gaps are  called primitive gaps  and the other gaps are non-primitive gaps. Merge and split  events occur when the robot crosses bitangent, as illustrated in Figure~\ref{f3}.
\subsection{Known Properties}
At each point of the search path, if the target is not visible, the robot reports a set of left and right gaps ($l$-gap and $r$-gap for abbreviation). Let $g_l$ be the most advanced non-primitive left gap ($l$-gap) and $g_r$ be the most advanced non-primitive right gap ($r$-gap) \cite{tab}, see Figure \ref{f3}. The two gaps have a fundamental role in path planning for the simple robot.
\begin{theorem} \cite{IRE, tab}\label{the}
	While the target is not visible, it is hidden behind one of the two  gaps, $g_l$ or $g_r$.
\end{theorem}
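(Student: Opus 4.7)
The plan is to argue by cases on the gap behind which the target $t$ could be hidden. Since the robot currently does not see $t$, and the visibility region of the robot is bounded by its detected gaps, $t$ must lie in the region hidden behind some gap $g$. I will first show $g$ must be non-primitive, and then use the street property together with the ``most advanced'' designation of $g_l$ and $g_r$ to conclude that $g \in \{g_l, g_r\}$.

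First I would rule out the possibility that $t$ is hidden behind a primitive gap. By definition, a primitive gap corresponds to a portion of the environment that was already visible to the robot at some earlier point along its path and became hidden only after crossing an inflection ray. If $t$ had ever been inside the visibility region, the sensor would have reported $t$ and, by the motion primitives, the robot would have moved toward it. Hence the gap currently hiding $t$ cannot be primitive; it must be a non-primitive $l$-gap or a non-primitive $r$-gap.

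Now suppose, without loss of generality, that $t$ is hidden behind a non-primitive $l$-gap $g$; the argument for $r$-gaps is symmetric. I would order the non-primitive $l$-gaps by the portion of $L_{chain}$ that they hide. Using mutual weak visibility of $L_{chain}$ and $R_{chain}$, the pieces of $L_{chain}$ hidden behind distinct $l$-gaps are pairwise disjoint and appear in a consistent order along the chain from $s$ toward $t$. The definition of $g_l$ as the \emph{most advanced} non-primitive $l$-gap means that the portion of $L_{chain}$ it hides is the one closest to $t$ along the chain. Since $t$ is the common endpoint of $L_{chain}$ and $R_{chain}$, any unexplored subchain that contains $t$ must be the last such subchain, and hence the one hidden behind $g_l$. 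Therefore $g = g_l$.

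The main obstacle is making the linear ordering of non-primitive $l$-gaps along $L_{chain}$ rigorous and verifying that ``most advanced'' correctly identifies the subchain containing $t$. This hinges on the weak-visibility property of streets, which ensures that the hidden subchains induced by different $l$-gaps do not interleave and inherit a total order consistent with the cyclic order of gaps observed at the robot. I would invoke the gap-evolution analysis already used in \cite{tab, IRE} rather than reprove it, and pin down the definition of ``most advanced'' as ``hiding the subsegment of $L_{chain}$ closest to $t$,'' which closes the argument.
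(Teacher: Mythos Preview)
The paper does not prove this theorem: it is quoted from \cite{IRE, tab} as a known result and immediately used, with no argument supplied. So there is nothing in the paper itself to compare your proposal against.

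Your two-step outline---first exclude primitive gaps because the regions they hide were once visible (and $t$ never was), then use the street property to pin the hiding gap down to the most advanced non-primitive $l$- or $r$-gap---is indeed the shape of the argument in the cited sources. Two remarks. Your primitive-gap step invokes the robot's motion (``the robot would have moved toward it''); it is cleaner, and closer to how \cite{tab} argues, to reason purely from the hypothesis that $t$ has not yet been visible, which alone excludes every region behind a primitive gap without appealing to the strategy. More substantively, in your last paragraph you effectively \emph{redefine} ``most advanced'' as ``hiding the piece of the chain nearest $t$,'' which makes the conclusion tautological; in \cite{tab} the phrase refers to position in the cyclic gap order reported by the sensor, and the actual content of the proof is showing that in a street these two notions coincide. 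You explicitly defer that verification to \cite{tab, IRE}, so what you have written is a faithful sketch of the cited argument rather than a self-contained proof.
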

From Theorem \ref{the}, if there exist only one of the two gaps ($g_r$ and $g_l$) then the goal is hidden behind of the gap. Thus, there is no ambiguity and the robot moves towards the gap, see Figure \ref{i22}(a). When both of $g_r$ and $g_l$ exist, a funnel case arises, see Figure \ref{i22}(b). At each funnel case, usually,    a detour from the shortest path is unavoidable.
\subsection{Essential Information}\label{me}
Whatever we  maintain  during the search strategy is location of $g_l$ and $g_r$. As the robot moves in the
street, the critical events that change
the structure of the robot's visibility region may dynamically change $g_l$ and $g_r$. Also,  by the robot movement, a  funnel case  may end or a new funnel may start. We refer to the point, in which  a funnel  ends  a \textit{critical point}  of the funnel.

The following events update the location of $g_l$ and $g_r$ as well as a funnel situation when the robot moves towards $g_l$ or $g_r$.

\begin{enumerate}
\item When     $g_r$/$g_l$ splits into $g_r$/$g_l$ and another   $r$-gap/$l$-gap, then $g_r$/$g_l$  will be replaced by the  $r$-gap/$l$-gap, (point 1 in Figure~\ref{i22}(b)).

\item When $g_r$/$g_l$ splits into $g_r$/$g_l$ and another   $l$-gap/$r$-gap, then $l$-gap/$r$-gap will be set as   $g_l$/$g_r$. This point is a critical point in which a  funnel  situation ends, (point~2 in Figure~\ref{i22}(b).

\item When $g_l$ or  $g_r$ disappears, the robot may achieve a critical point in which a   funnel situation   ends, (point~1 in Figure~\ref{i22}(b)).
\end{enumerate}
Note that the split and disappearance events may occur concurrently, (point~3 in Figure~\ref{i22}(b)). Furthermore, by moving towards $g_r$ and $g_l$, these gaps never merge with other gaps.

\section{Algorithm}
Now, we present our strategy for searching the street, from $s$ to $t$. Since the target is constantly behind one of $g_r$ and $g_l$, during the searching, the location of them is maintained and dynamically updated as  explained in the previous section.

\subsection{A deterministic strategy} At each point of the search path, especially at the start point $s$, there are two cases:
\begin{itemize}
\item
If only one of the two gaps ($g_r$ and $g_l$) existsor, or  they are collinear then the goal is hidden behind the gap. The robot moves towards the gap until the target is achieved or a funnel situation arises, see Figure \ref{i22}(a).
\item
If there is a funnel case, in order to bound the detour, the robot moves towards $g_r$ and $g_l$ alternatively.   At each   stage $i \in \lbrace 1,3,5,...\rbrace $, the robot moves  $a_i$ step(s) towards  $g_r$, and  at each   stage $i \in \lbrace 2,4,6,...\rbrace $, the robot moves $a_i$ steps  towards $g_l$ such that: $a_1=1$, $a_2=3$, and $a_i=2a_{i-1}$ for $i \geq 3$. 

 The robot continues moving towards $g_r$ or $g_l$ alternatively until a critical point of the funnel is achieved. At the point, one of $g_r$ or $g_l$ disappears, or $g_r$ and $g_l$ are collinear. So, the robot moves along the existing  gap direction until the target is achieved or a new funnel situation arises, as illustrated in Figure~\ref{i22}(b).  
\end{itemize} 
 
\subsubsection{The randomized strategy} 
Now, we present a randomized  search strategy based on the above deterministic strategy. At each point of the search path that only one of $g_r$ or $g_l$ exists, or the two gaps are collinear, the robot moves along the existing direction, similar the deterministic strategy. In the funnel case, first, the robot chooses a random real uniformly variable from $\left[  0,1\right) $ and sets  length of its step   by $2^{\varepsilon}$. Then, it chooses a uniform random variable $X$ from $\lbrace 0,1\rbrace$  to select  the direction  towards $g_r$ or $g_l$. 
If $X$ is  1/0, at each   odd stage $i$  the robot moves $a_i$ steps towards $g_r$/$g_l$, and at each  even  stage $i $, the robot moves $a_i$ steps towards $g_l$/$g_r$ ($a_n$ in the previous section). 

Similar to  the deterministic strategy, the robot continues moving towards $g_r$ and $g_l$ in the number of steps alternatively until the funnel case ends.

At each funnel, the actual randomization occurs only at first step for specifying length of steps, and for determining the direction of the movement. In the next section, we show the expected performance of our randomization algorithm is better than the  performance of our deterministic algorithm.

\begin{figure}[h]
  \centering
  \centering
\includegraphics[height=6cm]{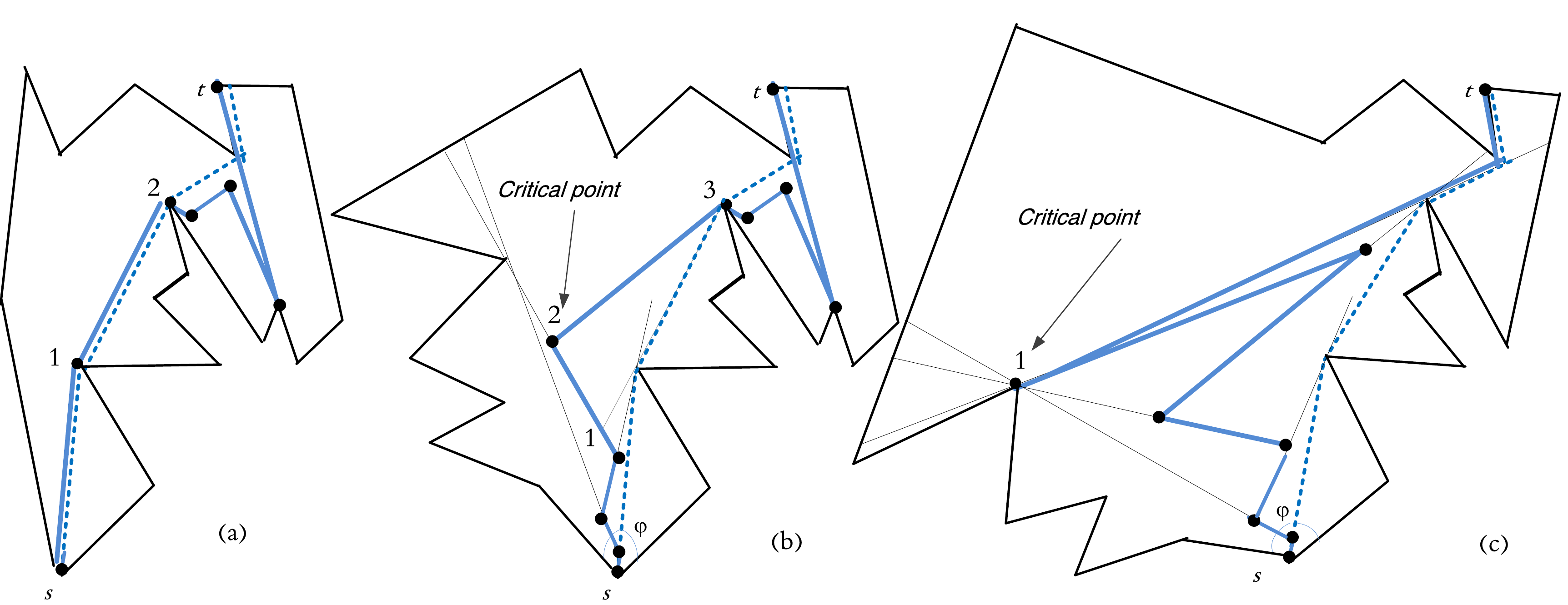}
   \caption{\small The bold path  is the robot search path, and the dotted path  is shortest path. (a) There is only $g_r$. Illustration of the algorithm for two opening angles, small and large angles respectively in (b) and (c). 
}
  \label{i22}
\end{figure}
\subsection{Correctness and Analysis}

Throughout the searching, the robot path coincides with the shortest path unless a funnel case arises. Then, in order to prove the competitive ratio of our strategy, we  compare length of the path and shortest path in a funnel case.

For analyzing the strategy, we inspire from the doubling strategy  by Baeza-Yates et al. ~\cite{Bae}.  In the strategy a robot moves back and forth on a line such that the distance to the start point doubles at each stage until the target is reached.
\begin{theorem}~\cite{Bae}~\label{XX}
The doubling strategy for searching a point on a line has
a competitive factor of 9, and this is optimal.
\end{theorem}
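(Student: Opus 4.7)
The plan is to establish the two halves of the statement by quite different arguments. For the upper bound on the competitive ratio of the doubling strategy, I would place the start $s$ at the origin of the real line and let $d > 0$ denote the (unknown) distance to the target, which without loss of generality lies on the positive side. At stage $i$ the robot walks out to distance $2^i$ in one direction, returns to $s$, and alternates the direction at each stage. Let $k$ be the first stage at which the robot ventures past $d$ on the correct side. Since the correct side is probed only at stages of one fixed parity, the last probe on that side before stage $k$ reached $2^{k-2}$ without finding the target, so $d > 2^{k-2}$. The total distance traveled is then at most
\[
\sum_{i=0}^{k-1} 2 \cdot 2^i \;+\; d \;=\; 2^{k+1} - 2 + d,
\]
and dividing by $d$ yields a ratio that tends to $9$ as $d \downarrow 2^{k-2}$ with $k \to \infty$.

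For the matching lower bound, I would follow the classical adversary argument. Any deterministic strategy on the line is captured by its sequence of turning distances $(x_i)_{i \ge 0}$, where $x_i$ is the farthest the robot reaches on its $i$th excursion; one may assume the excursions alternate sides and the $x_i$ increase. The adversary, knowing the strategy in advance, places the target infinitesimally beyond $x_{i-1}$ on the side of that earlier excursion, so the robot must first complete the $i$th excursion (on the wrong side), return to $s$, and then walk out past $x_{i-1}$ on the correct side; this incurs cost essentially $2(x_0 + x_1 + \cdots + x_i) + x_{i-1}$ to reach a target at shortest-path distance essentially $x_{i-1}$. The competitive ratio is therefore at least
\[
\sup_i \frac{2\sum_{j=0}^{i} x_j + x_{i-1}}{x_{i-1}},
\]
and optimizing this supremum over positive increasing sequences $(x_i)$, most cleanly via Gal's lemma, shows the infimum over strategies is exactly $9$, attained asymptotically by geometric doubling with ratio $2$.

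The main obstacle I anticipate is the lower-bound optimization: turning the displayed supremum into the clean constant $9$ requires either a convexity argument on the sequence $\log x_i$ or an application of Gal's functional inequality, neither of which is a one-line calculation, and the conclusion is somewhat delicate because the worst ratio is approached but never attained. The upper bound, by contrast, is a finite geometric series whose only subtle point is the parity bookkeeping that forces $d > 2^{k-2}$ (the last correct-side probe) rather than $d > 2^{k-1}$. Once both bounds are in hand, the two estimates combine to give both the competitive factor $9$ and its optimality.
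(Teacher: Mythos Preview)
The paper does not prove this theorem at all: it is quoted from Baeza-Yates, Culberson, and Rawlins~\cite{Bae} and used as a black box in the proof of Theorem~\ref{tl}. So there is no ``paper's own proof'' to compare against; the authors simply import the result.

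That said, your sketch is the standard argument from~\cite{Bae} and is correct. Your upper-bound computation is right, including the parity observation that the last correct-side excursion before discovery reached only $2^{k-2}$, which is what forces the supremum of $(2^{k+1}-2+d)/d$ to approach $9$. Your lower-bound setup is also the classical one: reducing an arbitrary deterministic strategy to its sequence of turning distances $(x_i)$, placing the target just past $x_{i-1}$, and bounding $\sup_i\bigl(1+2\sum_{j\le i}x_j/x_{i-1}\bigr)$ from below by $9$ via Gal's inequality (or an equivalent convexity argument on $\log x_i$). The only point worth tightening in a full write-up is the reduction step---one must argue that any strategy that does not simply alternate with increasing reach can be dominated by one that does---but this is routine and is handled in the cited reference. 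In short, you have supplied the proof the paper omits.
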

Opening angle, the angle between  $g_r$ and $g_l$, is always smaller than  $\pi$~\cite{IRE}. The simple robot walks towards within the opening angle. An important attribute of the angle is characterized in the following lemma.
\begin{lemma}~\label{lem}
By our strategy, the detour from shortest path for small  opening angle, in the funnel case, is  shorter than detour for large opening angle.
\end{lemma}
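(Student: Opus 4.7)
The plan is to reduce the claim to a purely geometric comparison by fixing the alternating step sizes $a_1,a_2,\dots$ and showing that, once the step schedule is fixed, both the total length of the robot's path inside the funnel and the length of the shortest path inside the funnel depend on the opening angle $\theta$ in a way that makes the difference monotone in $\theta$. I would place the robot at the entrance of the funnel at the origin, put the bisector of the opening angle along a reference axis, and let $g_r,g_l$ emanate along rays at $\pm\theta/2$. Without loss of generality the target is hidden behind $g_r$, so the shortest subpath inside the funnel is a segment along the $g_r$-ray up to the critical point.

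Next, I would compute the length contributions stage by stage. After the robot has walked $a_i$ toward $g_r$, the displacement achieved in the $g_r$-direction is exactly $a_i$. After it then walks $a_{i+1}$ toward the (new) $g_l$-direction, the projection of that segment onto the original $g_r$-ray is governed by a cosine term involving the angle between the two gap directions, which depends on $\theta$. The key identity I would use is that the \emph{extra} length incurred during a pair of consecutive stages, compared to the straight-line progress toward $g_r$, is a positive multiple of a factor of the form $(1-\cos\psi)$ where $\psi$ is the current angle between the two gap rays as seen from the robot; this factor is strictly increasing in $\psi$ on $[0,\pi)$. I would also verify that $\psi$ itself is monotone in the initial opening angle $\theta$, so that every term in the detour is monotone increasing in $\theta$.

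Finally, to convert a per-stage inequality into a global one, I would argue that the funnel terminates at a critical point that lies either on the $g_r$-ray or on the $g_l$-ray. In either case, the total path length inside the funnel is a finite sum of segments each of whose length-minus-projection contribution is monotone in $\theta$, while the shortest path length (the projection onto the target-bearing ray up to the critical point) decreases as $\theta$ grows, since a larger opening angle makes the winning ray shorter relative to the summed lengths $\sum a_i$. Subtracting the two gives a detour that is monotone nondecreasing in $\theta$, which is exactly the statement of the lemma.

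The main obstacle I anticipate is the bookkeeping around \emph{where} the funnel ends: the critical point can fall at any stage of the alternating schedule, and as $\theta$ varies the stage at which termination occurs can change. To handle this cleanly I would argue that for a worst-case $\theta$ the robot completes all stages up to the one in which the accumulated projection onto the winning ray first exceeds the distance to the critical point, and then show that both this stage count and the residual contribution to the detour vary with $\theta$ in a compatible direction; alternatively, one can compare two funnels with angles $\theta_1<\theta_2$ and rescale so that the winning rays match, reducing the claim to a side-by-side comparison of two paths sharing the same schedule.
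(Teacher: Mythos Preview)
Your core idea---measuring the ``wasted'' motion in a wrong-direction stage via a factor like $(1-\cos\psi)$ and noting this is increasing in the opening angle---is exactly the content of the paper's proof, but the paper states it in one sentence: one of the two alternating directions is correct, the other is a deviation, and ``clearly for large opening angle the deviation is greater,'' with a reference to the figure. There is no projection calculus, no stage-by-stage bookkeeping, and no discussion of where the funnel terminates; the lemma is treated as a qualitative observation rather than a quantitative bound.

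Your attempt at a rigorous version is reasonable in spirit but introduces issues the paper simply does not engage with. First, your statement that ``the shortest path length \ldots\ decreases as $\theta$ grows'' is misphrased: if the critical point is fixed, the shortest path to it has fixed length; what actually decreases with $\theta$ is the robot's \emph{projected progress} toward the critical point per stage, which forces more stages and hence a larger $\sum a_i$. Second, your model with static rays at $\pm\theta/2$ does not match the strategy, where $g_l$ and $g_r$ are points whose bearing from the robot changes as it moves and which can themselves be replaced by split events; you would need an argument that the angle $\psi$ seen along the walk is still controlled by the initial $\theta$. Third, as you yourself note, the stage at which the funnel ends can jump as $\theta$ varies, and you do not close that gap. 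None of this contradicts the lemma, but it means your write-up is considerably more elaborate than the paper's while still not airtight; for the purpose of matching the paper, the one-line ``larger angle $\Rightarrow$ larger deviation when moving toward the wrong gap'' is all that is given.
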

\begin{proof}
In each funnel case, the robot moves some steps towards $g_r$ or $g_l$, then changes its direction and moves some steps towards the  other. In the alternative movement, one of the directions is correct and the other is a deviation. Clearly for large opening angle the deviation is greater, as shown in Figure~\ref{i22}(c).
\end{proof}
Now, we can prove, the competitive factor of our deterministic strategy.
\begin{theorem}~\label{tl}
Our deterministic strategy guarantees a  path at most 9  times  longer than shortest path in the street   from $s$ to $t$.
\end{theorem}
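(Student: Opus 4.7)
The plan is to reduce the analysis of a single funnel to the classical doubling line-search problem and then show that the bound composes cleanly along a sequence of funnels. Outside the funnel cases the robot moves directly along the existing gap direction, and Theorem~\ref{the} guarantees that this direction lies on the shortest path to $t$; these segments therefore contribute exactly to the shortest-path length and require no charging. Consequently it suffices to bound, funnel by funnel, the robot's traversal against the shortest path between the funnel entrance and its critical point.

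For a single funnel, I would first appeal to Lemma~\ref{lem}: the worst-case detour is attained as the opening angle between $g_r$ and $g_l$ approaches $\pi$. In that limiting configuration the two gaps lie on nearly opposite rays emanating from the robot, so the alternating movement towards $g_r$ and then $g_l$ in amounts $a_1,a_2,a_3,\ldots$ degenerates into a back-and-forth motion on a line with the target hidden on one of the two sides. The sequence defined by $a_1=1$, $a_2=3$, and $a_i=2a_{i-1}$ for $i\geq 3$ satisfies the geometric-ratio $2$ doubling schedule (after the first step), which is exactly the schedule used by Baeza-Yates et al.\ to obtain the optimal competitive factor. Hence Theorem~\ref{XX} directly upper-bounds the robot's path inside the funnel by $9$ times the distance from the funnel entrance to its critical point, which is the length of the shortest path traversed inside the funnel.

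Next I would compose these per-funnel bounds. Let $s=p_0,p_1,\ldots,p_k=t$ be the sequence of funnel entrances and critical points encountered along the walk (with intermediate direct segments handled above). Denote by $L_i$ the length of the robot's subpath between $p_{i-1}$ and $p_i$, and by $S_i$ the corresponding shortest-path length. Non-funnel subpaths satisfy $L_i=S_i$, while funnel subpaths satisfy $L_i\leq 9 S_i$ by the previous step. Summing gives $\sum_i L_i \leq 9\sum_i S_i$, and since the shortest-path segments $S_i$ concatenate to a path from $s$ to $t$ of length at least $|st|_{\mathrm{sp}}$ (the geodesic distance), the competitive ratio of $9$ follows.

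The main technical obstacle is the first step: formally justifying that the worst-case detour is really captured by the line-search limit. Lemma~\ref{lem} gives the monotone qualitative statement, but to turn it into a quantitative inequality one must argue that for any opening angle strictly less than $\pi$ the deviation from the shortest path is dominated (pointwise along the alternation) by the deviation in the degenerate case; this is essentially a triangle-inequality argument applied to each oscillation, showing that the projection of the robot's displacement onto the $g_r$\textendash$g_l$ axis grows no slower than in the line case while the orthogonal excursion adds only a lower-order cost. Once this domination is in place, Theorem~\ref{XX} can be invoked as a black box and the composition across funnels is immediate.
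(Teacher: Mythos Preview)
Your proposal is correct and follows essentially the same approach as the paper: reduce the funnel case to opening angle $\pi$ via Lemma~\ref{lem}, recognize the resulting motion as the doubling strategy on a line, and invoke Theorem~\ref{XX} to obtain the factor~$9$. You are in fact more careful than the paper, which neither spells out the composition across multiple funnels nor addresses the quantitative gap you flag in turning Lemma~\ref{lem} into a pointwise domination statement; the paper simply asserts the $\pi$-case is worst and applies Theorem~\ref{XX} directly.
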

\begin{proof}
The robot always moves towards $g_r$ and $g_l$ while the gapas  update via splitting by crossing over bitangents,  as explained in \ref{me}. Numbers of the events are finite, so the target is achieved in finite time.
From Lemma \ref{lem}, there is further deviation from shortest path for large opening angles. The angle never exceeds $\pi$. Then, for computing a competitive factor, we consider it equals $\pi$. Starting from $s$,  the robot moves $a_1=1$ step towards $g_r$, then moves   $ a_2=1+2 $ steps towards $g_l$, and again moves forth  $a_3= 2a_2=2 +2^{2} $ steps towards $g_l$,  moves back  $a_4= 2a_3=2^{2}+2^{3} $ steps towards $g_l$, and so on. In other words the robot moves back and forth on the line that contain $g_l$ and $g_r$  such that the distance to the start point  $s$ doubles at each stage until the critical point is reached. By Theorem~\ref{XX} competitive factor for the search strategy is 9.
\end{proof}

Kao, Reif, and Tate~\cite{kao} offer a  competitive randomized algorithm for searching on a line.  The procedure is similar to the doubling strategy;  the first step is characterized as a random number, and lengths of subsequent steps is multiplied  by $r$. The competitive factor of their randomized strategy is
 $1+(1+r)/\ln r$. A similar argument to the proof of theorem \ref{tl} shows  that when the opening angle is $\pi$, our  randomized search strategy coincides with Kao et al.  strategy to search a point on a line with  $r=2$. So, the  theorem  below is  immediately satisfied.

\begin{theorem}
The randomized strategy generates a search path to achieve target t in the street, starting from s, with a competitive ratio of 5.33.
\end{theorem}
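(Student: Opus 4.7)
The plan is to mirror the structure of the proof of Theorem \ref{tl}: reduce the analysis to a single funnel in the worst-case geometry, show that our randomized strategy inside that funnel is exactly an instance of the Kao--Reif--Tate randomized doubling strategy on a line, and then invoke their competitive bound with $r=2$.

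First, I would observe that outside every funnel the robot's trajectory coincides with the shortest path, so the competitive analysis reduces to bounding the expected detour accumulated inside the funnels. For a single funnel, Lemma \ref{lem} tells us that the detour is monotone in the opening angle, so it suffices to bound the expected ratio in the limiting case in which the opening angle approaches $\pi$. In that degenerate configuration the two non-primitive gaps $g_r$ and $g_l$ lie on a common line through the robot's current position, and by Theorem \ref{the} the hidden target is located on that line, at an unknown side of the start of the funnel. The alternating motion between $g_r$ and $g_l$ is therefore geometrically a back-and-forth walk along a single line with the target at an unknown point, the classical setting of randomized line search.

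Second, I would verify step-by-step that our randomized protocol, specialized to this line, is exactly the Kao--Reif--Tate strategy with growth factor $r=2$. The first probe has length $2^{\varepsilon}$ with $\varepsilon$ uniform on $[0,1)$; the random bit $X$ chooses the initial side; and the sequence $a_i$ produces displacements from the start of the funnel equal to $2^{\varepsilon}, 2^{\varepsilon+1}, 2^{\varepsilon+2}, \dots$ on alternating sides, since $a_1=1$, $a_2=1+2$, $a_3=2+2^2$, $a_4=2^2+2^3$, and so on. Thus the probe lengths on the line are $2^{\varepsilon}\cdot 2^{i-1}$ with alternating direction, which is precisely the randomized doubling procedure of Kao, Reif and Tate. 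Their analysis gives expected cost at most $\bigl(1+(1+r)/\ln r\bigr)$ times the distance to the target; setting $r=2$ yields $1+3/\ln 2 \approx 5.33$.

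Finally, I would close the argument by summing over funnels: since each funnel contributes an expected detour at most $5.33$ times the shortest-path length it spans, and the straight segments between funnels are already optimal, the overall expected ratio is bounded by $5.33$. The main obstacle I anticipate is the rigorous justification of the reduction to the worst-case angle $\pi$: Lemma \ref{lem} is stated qualitatively, so care is needed to argue that the same random choices yield a monotone dependence of the \emph{expected} detour on the opening angle, rather than just the deterministic detour. Once that monotonicity is accepted as in the proof of Theorem \ref{tl}, identifying our protocol with Kao--Reif--Tate's and plugging in $r=2$ is routine.
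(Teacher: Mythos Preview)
Your proposal is correct and follows essentially the same approach as the paper: reduce to the worst-case opening angle $\pi$ via Lemma~\ref{lem} as in the proof of Theorem~\ref{tl}, identify the resulting protocol with Kao--Reif--Tate's randomized line search at $r=2$, and read off $1+3/\ln 2\approx 5.33$. The paper's own argument is in fact terser than yours---it simply asserts that ``a similar argument to the proof of Theorem~\ref{tl}'' suffices---and does not address the expected-versus-deterministic monotonicity issue you rightly flag.
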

\section{Conclusions}

In this paper we have developed two similar search strategies for  walking in streets problem for a simple robot. The point  robot can only detect the gaps and the target in the environment. Also the robot  can only moves towards the gaps. Our deterministic strategy achieves optimal competitive factors of 9, and is simpler than previous known result. The other strategy is a randomized strategy based on the deterministic strategy that has a better performance. Expected length of the generated path by the random strategy is at most   5.33 times longer than shortest path. It would be absorbing if there is an optimal randomized search strategy.




\end{document}